\documentclass[11pt]{article}

\usepackage[inline]{enumitem}
\usepackage{amsmath}
\usepackage{amsthm}

\usepackage[breaklinks=true,
            colorlinks = true,
            linkcolor = blue,
            urlcolor  = blue,
            citecolor = blue,
            anchorcolor = blue]{hyperref}
\usepackage{breakurl}

\newcommand{\mailto}[1]{\href{mailto:#1}{\nolinkurl{#1}}}

\title{Interactive coin offerings}

\author{
Jason Teutsch\\
\emph{TrueBit Establishment}\\
\mailto{jt@truebit.io}
\and
Vitalik Buterin\\
\emph{Ethereum Foundation}\\
\mailto{vitalik@ethereum.org}
\and
Christopher Brown\\
\emph{Modular, Inc.}\\
\mailto{christopher@modular.network}
}

\date{December 11, 2017\footnote{This version, updated in 2019, includes an Epilogue (Section~\ref{sec:epilogue}).}}

\theoremstyle{plain}
\newtheorem*{prop}{Proposition}
\theoremstyle{definition}
\newtheorem*{defn}{Definition}

\numberwithin{equation}{section}


\begin{document}

\maketitle

\begin{abstract}
Ethereum has emerged as a dynamic platform for exchanging cryptocurrency tokens.  While token crowdsales cannot simultaneously guarantee buyers both certainty of valuation and certainty of participation, we show that if each token buyer specifies a desired purchase quantity at each valuation then everyone can successfully participate.  Our implementation introduces smart contract techniques which recruit outside participants in order to circumvent computational complexity barriers.
\end{abstract}

\section{A crowdsale dilemma} \label{sec:dilemma}

This year has witnessed the remarkable rise of token crowdsales.  Token incentives enable new community structures by employing novel combinations of currency rewards, software use rights, protocol governance, and traditional equity.  Excluding Bitcoin, the total market cap of the token market surged over 60 billion USD in June 2017\footnote{\url{https://coinmarketcap.com/charts/}}.  Most tokens originate on the Ethereum network, and, at times, the network has struggled to keep up with purchase demands.  On several occasions, single crowdsales have consumed the network's entire bandwidth for consecutive hours.

Token distributions can take many forms.  Bitcoin, for example, continues to distribute tokens through a competitive, computational process known as \emph{mining}.  In this exposition, we shall concern ourselves exclusively with the now typical situation in which an anonymous class of buyers wishes to purchase yet-to-be-generated ERC20 tokens over the Ethereum network in exchange for Ethereum's native currency.  Unlike an equity distribution event in which prospective buyers can estimate share values based on existing and potential future revenue streams, tokens sales may not project any revenue at all.  Since traditional analysis fails to estimate initial market valuation for new tokens, buyers must rely on new signals and methods for determining market prices.  The token issuer, on the other hand, faces the unprecedented challenge of not knowing her buyers.  In particular, she cannot tell whether or not two distinct purchasing addresses belong to the same person.

The Ethereum community has experimented with various sale configurations for ERC20 tokens.  In a \emph{capped sale}, for example, the project issuing the ERC20 token announces a fixed price for each new token as well as the maximum (and minimum) number of tokens to be sold.  Capped sales can reach tens of millions of dollars and sell out in a matter of minutes, leaving buyers unable to participate, disappointed, and frustrated.  Uncapped sales, which run without such maximums, provide buyers little clue as to the fraction of total tokens their contribution will ultimately purchase.  Other distribution experiments, including hidden caps and reverse Dutch auctions, have suffered similar fates~\cite{buterinsdilemma}.  Indeed increasing purchase power and limited supply may cause buyers in a reverse Dutch auction to jump in too soon.

Recently, Buterin~\cite{buterinsdilemma} distilled two desireable, mutually exclusive properties of crowdsales.
\begin{prop}
No token crowdsale satisfies that both:
\begin{enumerate}[label=\textup{(\textsc{\roman*})}]
\item a fixed amount of currency buys at least a fixed fraction of the total tokens, and

\item everyone can participate.
\end{enumerate}
\end{prop}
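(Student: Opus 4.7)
My plan is a short counting argument by contradiction. Suppose some crowdsale mechanism satisfies both (i) and (ii). Property (i) promises constants $c>0$ (a fixed amount of currency) and $f>0$ (a fixed fraction) such that any buyer who commits $c$ units of currency is guaranteed to receive at least an $f$-fraction of the total token supply. Fix an integer $n$ with $n>1/f$.

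Now I would invoke property (ii) against a worst-case population: imagine $n$ distinct buyers (or, since the issuer cannot distinguish Sybils, $n$ distinct addresses controlled by possibly fewer agents), each of whom commits exactly $c$ units of currency. By (ii), every one of them participates successfully, and by (i) each one walks away with at least an $f$-fraction of the total supply. Summing over the $n$ participants, the fraction of the total token supply distributed is at least $nf > 1$, which is impossible since the total supply is, tautologically, a single whole. This contradiction establishes the proposition.

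The only real subtlety, and what I expect to be the main thing to nail down precisely, is the interpretation of ``everyone can participate.'' Since the issuer cannot bound the number of participating addresses in advance (no Sybil resistance is assumed in the setting described in Section~\ref{sec:dilemma}), property (ii) must tolerate an arbitrarily large $n$; otherwise some prospective buyer is turned away and (ii) already fails. Given that reading, the pigeonhole step above goes through immediately, and the rest is bookkeeping. No deeper machinery — no cryptography, no game theory — appears to be required; the tension is purely arithmetic between a positive lower bound on each successful purchase and an unbounded number of successful purchasers.
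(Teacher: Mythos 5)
Your argument is correct and is essentially the paper's own one-line proof, just unfolded: the paper observes that a fixed fraction $p$ per unit of currency caps total revenue at $1/p$, which is exactly your pigeonhole step $nf>1$ read in the contrapositive. Your added remark that (ii) must tolerate arbitrarily many participants (no Sybil resistance) is a fair explication of what the paper leaves implicit, but it is not a different route.
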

\begin{proof}
If one unit of currency purchases at least $p$~fraction of the tokens, then the total sale revenue cannot exceed $1/p$.
\end{proof}

Clearly any fixed valuation scheme cannot guarantee universal participation, however, we shall construct a crowdsale protocol such that, if each participant specifies a desired purchase quantity at each valuation, then the ultimate token cost to percentage ratio satisfies all buyers (with respect to both valuation and participation).

\section{The bypass} \label{sec:bypass}

Our interactive construction aims to establish an equilibrium of purchase amounts whose sum is satisfactory to all buyers at some uniform valuation.  Given that a liquid market for the new token does not exist prior to the crowdsale, we shall use the crowdsale process itself to reach a common view of the token's present value.  Our protocol will be fair in the sense that parties endowed with large amounts of capital or Ethereum mining power cannot gain either participation advantage or lower cost per percentage of total tokens.  Furthermore, incentives will counteract buyers' resistance to partake in an initially illiquid market. Lastly, we remark that the crowdsale is a game of perfect information in that the public has guaranteed access to all sales data.

Our approach distinguishes itself from prior crowdsale distributions in that:
\begin{enumerate}
\item buyers can \emph{withdraw} their contributions after committing them to the sale (within certain limits), and

\item the protocol exploits sophisticated bookkeeping capabilities of smart contracts.
\end{enumerate}
The corresponding crowdsale is interactive in the sense that potential buyers may enter and exit the crowdsale based on behaviors of other buyers and in doing so tend the valuation towards a market equilibrium.  The protocol also allows sufficient time for informal, social interactions.

The interactive crowdsale begins after the token issuer deploys a smart contract on Ethereum's blockchain.  For the purposes of this discussion, a \emph{smart contract} is a universally trusted machine that, over time, takes in and pays out Ethereum's native currency and deploys a newly generated token.  A combination of messages and native currency from pseudonymous Ethereum \emph{addresses}, each held by some potential buyer, algorithmically determines these payments and deployments.  The smart contract thus effectively collects and retains the crowdsale's token balances.  Through their respective addresses, buyers purchase with native tokens and eventually receive newly minted crowdsale tokens in return.

New messages from addresses to smart contracts are broadcast via \emph{blocks} which occur at regular intervals (approximately every 15--20 seconds\footnote{\url{https://etherscan.io/chart/blocktime}}).  In this way, blocks measure the passage of time on the Ethereum network.  The crowdsale's smart contract can take the current block number as input and therefore alter its behavior as a function of time.

We now describe the components of a simple, interactive coin offering.  We shall present a more detailed specification in Section~\ref{sec:icoprotocol}.

\begin{description}
\item[Basic step:] In each block epoch, buyers can either purchase tokens or voluntarily withdraw funds from the crowdsale.  Buyers specify a maximum sale valuation at which they are willing to participate, and if the sale amount ever reaches this personal threshold, the buyer's \emph{bid} is canceled and she receives a refund.  In Section~\ref{sec:personalmin}, we add support for  bid activation triggered by sale lower bounds.

\item[Withdrawal lock:]  After a certain number of blocks, voluntary withdrawals are no longer permitted.  In a 30-day crowdsale, for example, the smart contract might permit voluntary withdrawals during the first 20 days, but during the last 10 days, only automatic withdrawals are allowed.

\item [Inflation ramp:] Buyers who purchase tokens early receive a discounted price. The maximum bonus might be 20\% (a typical amount for crowdsales today).  The bonus decreases smoothly down to 10\% at the beginning of the withdrawal lock, and then disappears to nothing by the end of the crowdsale.
\end{description}

Individual buyers may submit multiple bids in the crowdsale in order to indicate distinct bid amounts for various valuations.  In particular, they may choose personal thresholds exceeding the total amount of currency in circulation in order to guarantee a successful bid.  The time prior to the withdrawal lock provides an opportunity for buyers to calibrate their purchase amounts, and the period after the withdrawal lock pushes the sale valuation to converge towards an equilibrium value. The inflation ramp reduces entrance inertia and encourages formation of a liquid market.

\section{Cast of characters} \label{sec:coc}

We shall assume certain uniform characteristics among crowdsale participants which will enable us to derive crowdsale invariants and security properties in Section~\ref{sec:analysis} and provide justification for the heuristics discussed in Section~\ref{sec:heuristics}.

\subsection{Buyers} \label{sec:buyers}

Buyers express their individual participation goals by submitting a series of bids, each of which constitutes a ``step'' function.

\begin{defn}
A buyer's \emph{valuation table} is a piecewise step function from the crowdsale's total sale amount to the buyer's contribution amount.
\end{defn}

Buyers may wish to purchase tokens for many different reasons, and we make no specific assumptions in this regard.  For the purposes of our model, all buyers satisfy the following properties.

\begin{enumerate}
\item \emph{Demand is inverse to supply.}  The total sale amount affects individuals' inclinations to contribute, and in a liquid market, buyers will purchase at least as many tokens at lower valuations as they will at higher ones.  In Section~\ref{sec:personalmin}, we shall relax the latter part of this assumption and modify the core procotol of Section~\ref{sec:icoprotocol} accordingly.  We explicitly do not assume that individuals agree on the volume of trade that constitutes a liquid market nor on  the valuation threshold(s) at which contribution inclinations decline.

We shall show in Section~\ref{sec:pci} that each buyer's cumulative purchases ultimately and effectively converge to a monotonically decreasing valuation table under our assumption that buyer demand decreases at higher valuations.  Buyers failing to match this profile would complicate our protocol and analysis.  First, accommodating and monitoring lower entry bounds in the protocol places additional computational stress on the crowdsale smart contract.  In Section~\ref{sec:personalmin}, we discuss an efficient workaround.  Second, buyers can anyway achieve an effect similar to a non-decreasing valuation table by manually purchasing additional tokens later in the sale once the total sale amount has reached the target threshhold.  If sufficiently many buyers were to apply this strange strategy, however, the network might become congested from a positive feedback loop, and then these strategies would fail to execute properly.

\item \emph{Preference for liquid markets.}  Buyers have intrinsic inertia against entering a new crowdsale.  Tokens held by few owners may be difficult to exchange and therefore have uncertain value.  Given the risks of purchasing first, and barring other incentives, most buyers prefer to wait for others to purchase before they do.  Waiting times may vary from buyer to buyer.  We discuss other possible sources of inertia and their circumvention in Section~\ref{sec:personalmin}.

\item \emph{Reliance on social influences.}  Buyers depend on social influences to make purchase decisions.  Since the immediate value of new tokens depends largely on others' beliefs, buyers necessarily interact, either directly or indirectly, with other buyers.  At the beginning of a crowdsale, Buyers lack reliable information with which to valuate the new token.  Social gossip moves at a much slower pace than pure algorithmic trading and consequently dictates the crowdsale pace.  Finally, we assume that social interactions will lead each buyer to eventually, but well before the end of the crowdsale, converge to a final valuation table.

\item \emph{Preference for simplicity.}  Complex procedures for purchasing tokens decreases participation.  The tolerable threshold varies from buyer to buyer, and particular sets of rules or steps may encourage or discourage certain types of buyers.

\item \emph{Pseudonymity.}  Buyers need not disambiguate their identities in order to participate in the crowdsale.  In fact, we expect each buyer to compose her valuation table with bids from multiple pseudonymous addresses.
\end{enumerate}

\subsection{Adversaries} \label{sec:adversary}

We define an \emph{adversary} to be any entity which performs network actions, including purchases and withdrawals, in order to decrease his cost per percentage of total tokens.  We assume that the adversary has significant financial and mining resources, but not enough to create an extended denial-of-service attack which prevents other bids from entering the crowdsale.  In particular, we shall assume that the amount of time that the adversary can sustain significant congestion or censorship on the network is negligible with respect to the duration of the crowdsale.  We also assume that the adversary restricts his actions to the Ethereum network.  He cannot, for example, physically restrain other buyers who wish to participate in the crowdsale.  Finally, we assume that the  crowdsale smart contract always processes bids correctly.

\subsection{Heuristics} \label{sec:heuristics}

The present interactive protocol has two aims.
\begin{enumerate}
\item Maximize useful market information available to buyers at the time of purchase.
\item Provide a fair distribution in line with the conclusion of Section~\ref{sec:dilemma}.
\end{enumerate}
We explicitly do not optimize for maximal valuation.  It remains an open problem to analyze the extent to which psychological and rational forces in the interactive coin offering model impact valuation relative to traditional auction methods.

Interactive coin offerings permit buyers to alter their bids in reaction to signals from other buyers.  As a result, adversaries may intentionally broadcast deceptive or confusing signals in order to achieve pricing gains.  In Section~\ref{sec:blackout}, we give examples of such attacks and explain how the interactive protocol incentivizes against them.  The protocol's pre-withdrawal lock phase aims to maximally reveal useful market information while minimizing buyers' commitments.   While, by design, the protocol allows buyers to change their mind based on new market information, it also must penalize buyers who intentionally mis-signal in order to ensure meaningful market updates.  In order for the protocol to enforce penalties, it must force each buyer to make some formal commitment with each bid.  Intuitively, in order to reach price convergence, buyers must become increasingly confident about market information over time.  Thus, as a prerequisite, buyers' commitments must increase over time.  Therefore the protocol gradually decreases the fraction of active crowdsale funds available for voluntary withdrawals.

The latter phase of the protocol, after the withdrawal lock, allows the protocol to satisfy aim~2 above while avoiding attacks from ``whales'' (see Section~\ref{sec:mvi}).   It does this by establishing two invariant properties, namely that the sale satisfies all buyers in the sense of Section~\ref{sec:dilemma} (and more formally Section~\ref{sec:pci}), and that valuation monotonically increases as shown in Section~\ref{sec:mvi}.

\section{ICO protocol} \label{sec:icoprotocol}

We now describe the operations of the crowdsale's smart contract.  For the purposes of this presentation, \emph{valuation} of the crowdsale refers the total value of tokens sold with respect to the native currency as opposed to the value of the total number of tokens generated.  We shall assume that any tokens created but not sold in connection to the crowdsale event represent a fixed fraction of the total number of tokens generated.  Therefore, regardless of valuation, a given fraction of crowdsale tokens represents a fixed fraction of the total tokens generated.

As new bids enter the crowdsale (Step~1 below), the protocol nullifies active bids with minimal personal caps (Step~3).   Step~3.3 issues partial refunds in order to enforce a monotone valuation invariant  (see Section~\ref{sec:mvi}).  Consequently, in case several bids  tie for minimal personal cap, the protocol refunds an equal fraction of each.  Buyers may stagger their bid values so as to avoid such ties, however, the monotone valuation invariant persists even without such action.

In the protocol below, Step~2 applies only to the phase prior to the withdrawal lock, while Step~3 applies once the withdrawal lock is in effect.

\begin{description}
 \item [Initialization.] All addresses have ``inactive'' status.  Fix time thresholds $0 \leq t < u$, where $t$ corresponds to the ``withdrawal lock'' threshold of Section~\ref{sec:bypass}, and set $s=0$.  Let $p(s)$ be a positive-valued, linear\footnote{Section~\ref{sec:blackout} makes use of linearity in conjunction with Step~1.2 and Step~2.2 below.}, decreasing function representing the purchase power of a native token at stage~$s$.  Finally, define the \emph{crowdsale valuation at the present instant} as follows.
\[
V =
\begin{cases}
0 & \text{if no addresses are active;}\\
\sum_{\text{$A$ active}} v(A) & \text{otherwise.}
\end{cases}
\]
Here $A$ is an address, and $v(A)$ is a function mapping addresses to quantity of tokens as specified below.

\item[Main Loop.] The following four steps are repeated in each block while $s \leq u$, where $u$ delimits the end of the crowdsale.
\begin{itemize}
\item[\textsc{Step~1}:] \textsc{Receive bids.}
\begin{enumerate}
\item Any ``inactive'' address $A$ may send to the crowdfund smart contract:
\begin{itemize}
\item  a positive quantity of native tokens $v(A)$ along with
\item  a positive-valued \emph{personal cap} $c(A) > 0$.  In case $s \geq t$, i.e.\ when the withdrawal lock is in effect, we require the stricter inequality $c(A) > V$.
\end{itemize}

\item The smart contract then
\begin{itemize}
\item sets the address balance $b(A) = v(A) \cdot p(s)$, effectively implementing the inflation ramp (Section~\ref{sec:bypass}), and
\item sets $A$'s status to ``active.''
\end{itemize}
\end{enumerate}

\item[\textsc{Step~2}:] \textsc{Voluntary withdrawals (execute this step iff $s < t$)}.

The following only applies prior to the withdrawal lock at time~$t$.  Any ``active'' address $A$ may signal that it wishes to cancel its bid from any previous stage.   Upon such signal, the crowdfund smart contract does the following:
\begin{enumerate}
\item refunds $v(A) \cdot (t-s)/t$ native tokens back to $A$,
\item sets
\[
b(A) = v(A) \cdot s/t \cdot \left[p(a) - \frac{p(a) - p(u)}{3}\right],
\]
where $a$ denotes the time at which address $A$ originally made its bid,
\item sets $A$'s status to ``permanent.''
\end{enumerate}
The three steps above refund a fraction of $A$'s capital to the buyer and permanently commit a fraction of $A$'s capital to the sale while scratching a third of the ``bonus'' pricing.  In Section~\ref{sec:blackout}, we shall argue that our chosen parameter of one third in Step~2.2 suffices to deter rational attacks.  Note that the fraction of capital committed after a voluntary withdrawal increases as the withdrawal lock approaches.  

\item [\textsc{Step~3}:] \textsc{Automatic withdrawals (execute this step iff $s \geq t$)}.

While there exists an active address~$B$ whose personal cap is exceeded by the present crowdsale valuation, i.e. $V > c(B)$,
repeat the following three steps.  We update the value~$V$ only after the dashed bullets in each iteration of the while loop, regardless of whether or not \eqref{eqn:meatball} holds.
\begin{enumerate}
\item Let $B_1, \dotsc, B_k$ be the (distinct) active addresses with minimal personal cap at the present moment, i.e.
\[
c(B_i) = \min \{c(A) : \text{$A$ is active or permanent}\}
\]
for all $i \leq k$, and let
\[
S = \sum_{i=1}^k v(B_i).
\]
\item If removing the bids of $B_1, \dotsc, B_k$ does not suffice to satisfy all personal caps from active addresses, i.e. 
\begin{equation} \label{eqn:meatball}
V - S \geq c(B_1),
\end{equation}
then the crowdsale smart contract kicks out the entirety of these bids.  In more detail, the smart contract:
\begin{itemize}
\item refunds $v(B_i)$ to $ B_i$ for all $i \leq k$, and 
\item sets the statuses of $B_1, \dotsc, B_k$ to ``used.''
\end{itemize}
\item Otherwise, the reverse inequality of \eqref{eqn:meatball} holds, and only some fraction of each of $v(B_1), \dotsc, v(B_k)$ comes out of the crowdsale.  Let $0 <q < 1$ be the minimum (positive) fraction of these quantities that must be removed in order to satisfy all remaining personal caps, i.e.
\[
q =
\frac{V - c(B_1)}{S}.
\]
For all $i \leq k$, the crowdfund smart contract:
\begin{itemize}
\item refunds $q \cdot v(B_i)$ to $B_i$,
\item sets the new value of $v(B_i)$ to be $(1 - q) \cdot v(B_i)$, and
\item sets the new value of $b(B_i)$ to be $(1- q) \cdot b(B_i)$.
\end{itemize}
The $B_i$'s remain ``active,'' and the total crowdsale valuation is now exactly $c(B_1)$ because the procedure above removes exactly $qS$ native tokens from the crowdsale's smart contract.
\end{enumerate}
\item[\textsc{Step~4}:] Increment~$s$.
\end{itemize}
\item[Final Stage.] Each ``active'' or ``permanent'' address~$A$ receives $b(A)$ tokens at the end of the crowdsale.
\end{description}

Note that the loop in Step~3 eventually terminates because $V$ decreases with each iteration while $\min \{c(B) : \text{$B$ is an active address}\}$ increases.

\section{Analysis} \label{sec:analysis}

We conclude by highlighting some properties of the interactive coin offering protocol detailed in Section~\ref{sec:icoprotocol}.  Our analysis relies on two, key, quantitative invariants which come into effect after the withdrawal lock (Section~\ref{sec:bypass}), namely that valuation is monotonically increasing over time, and that all personal caps remain above the current valuation in each block.

\subsection{Personal cap invariant} \label{sec:pci}

We argue that the final crowdsale valuation and purchase amounts, after all automatic withdrawals from the pre-withdrawal phase have completed, satisfy every buyer's valuation table (see Section~\ref{sec:buyers}).   In order to parse this statement, we need to explain two things.
\begin{enumerate}
\item How does the buyer's cumulative purchases from various addresses formally correspond to a valuations table?

\item What does it mean for a crowdsale to ``satisfy'' a valuation table?
\end{enumerate}

Let $V$ denote the final valuation of the crowdsale.  Regarding item 1., note that the net purchase effect $\mathcal{A}$ of a bid from an address~$A$ is a single-step valuation table (modulo a single point):
\[
\mathcal{A}(V) =
\begin{cases}
v(A) & \text{if $V < c(A)$;} \\
\text{some value in $[0, v(A)$]} & \text{if $V = c(A)$;}\\
0 & \text{if $V > c(A)$.}
\end{cases}
\]  
We cannot specify a definite value for $\mathcal{A}[c(V)]$ because the purchase amount at this valuation depends on the bids made by other addresses.  $A$ could either receive a partial refund in Step~3 of Section~\ref{sec:icoprotocol}, or none at all.  The buyer's cumulative purchase is determined by the sum of his bids.  In other words, the buyer's (stepwise) valuation table equals the sum of the single-step valuation tables for his addresses, which answers item~1.  As noted in Section~\ref{sec:buyers}, the valuation table graph restricted to the right of the last bid made by the buyer is a monotonically decreasing function because single-step bid functions may end but not begin after that point.

We now turn our attention to item~2.  We say that a valuation $V$ and purchase amount $a$ satisfy a buyer's valuation table $T$ if the following holds:
\begin{enumerate} [label=({\alph*})]
\item  $a = T(V)$ if $V$ is an interior point of some valuation table step;

\item otherwise $a$ lies somewhere between the left and right limit points:
$$\lim_{x \to V^+} T(x) \leq a \leq  \lim_{x \to V^-} T(x).$$
\end{enumerate}
There are two cases to consider.  If the final valuation $V$ from the crowdsale does not equal $c(A)$ for any of the buyer's active addresses $A$, then $V$ is not a limit point of any of the single-step valuation tables for the buyer's addresses, and $T(V)$ is simply the sum of the single-step valuation tables evaluated at $V$.  In this case, $T(V)$ matches the purchase amount exactly, satisfying~(a) above.  Otherwise, $V$ matches the personal cap for some address(es).  Then $V$ is a limit point of some step on the buyer's valuation table, and the actual purchase amount is the sum of $v(A)$ over active addresses $A$ whose personal cap does not equal $V$ plus some fraction of the $v(A)$'s for addresses $A$ whose personal cap matches $V$, yielding~(b) as desired.

In summary, the crowdsale satisfies the buyer's valuation table which closely resembles the sum of her bids.

\subsection{Monotone valuation invariant} \label{sec:mvi}
We shall demonstrate that regardless of what bids buyers submit to the crowdsale smart contract, valuation is monotonically increasing after the distinguished time threshold~$t$ (see Section~\ref{sec:icoprotocol}).  After time~$t$, voluntary withdrawals do not occur, so we need only consider the other protocol steps.  The loop condition in Step~3 guarantees that the valuation at the beginning of Step~1 is less than or equal to the personal cap for each active address.  Indeed the active addresses in each iteration of the loop in Step~3 are a subset of those addresses which were active at the end of Step~1, and furthermore the valuation at the end of Step~3 is no less than the personal cap of every active address (regardless of whether the loop terminated after Step~3.2 or Step~3.3).  To recap, the valuation at the beginning of Step~1 is less than or equal to the minimum of all active addresses at that time, which is less than or equal to the valuation at the end of Step~3, which proves the claim.

The invariant property above allows the interactive coin offering to resist ``pushout attacks'' from rich buyers, or \emph{whales}, of the following form.  Suppose there are two existing bids purchasing 30 tokens each, and each bid has a personal cap of 79 tokens.  Now say that a whale bids a 50 token purchase with a large personal cap of 200 tokens.   The total of all bid amounts would now be 110 tokens, exceeding the personal caps of the original two bids.  If those two bids were to come out, the whale would have a bid with a valuation of 50, which is lower than the original valuation of 60 tokens.  The invariant above proves this cannot happen.

\subsection{Interactive blackouts} \label{sec:blackout}

Ulrich Gall\footnote{\href{https://medium.com/@ulrichgall/please-help-me-understand-before-the-withdrawal-lock-bidders-can-just-completely-withdraw-their-dd95792f8e2e}{{\tt https://medium.com/@ulrichgall/please-help-me-understand-before-the-}\\{\tt withdrawal-lock-bidders-can-just-completely-withdraw-their-dd95792f8e2e}}} pointed out that an inflation ramp (see Section~\ref{sec:bypass}) is not incentive compatible with allowing buyers to freely withdraw their bids.  Consider what would happen if an adversary were to place an enormous bid at the beginning of the crowdsale with the intention of voluntarily withdrawing most of its capital in the moments preceeding the withdrawal lock.  On account of the inflation ramp, the adversary would receive a \emph{bonus} for his early participation, while other truthful buyers, who wait to place their bids in response to a deceptively high valuation, would not obtain bonuses.  Consequently, this attack strategy affords the adversary a price advantage relative to other buyers.  Since the adversary stands to gain from broadcasting misinformation, the crowdsale protocol must incentivize against his action.

The observation above directly justifies the lower bound for personal caps given in Step~1.1 of the crowdsale protocol (Section~\ref{sec:icoprotocol}).  While the requirement $c(A) > V$ is used in the proof of the monotone valuation invariant (Section~\ref{sec:mvi}), imposing such a condition during the voluntary withdrawal phase might permit an adversary to suppress market signaling via the strategy described in the previous paragraph.  We therefore relax this constraint to $c(A) > 0$ prior to the withdrawal lock and forbid automatic withdrawals during this period.

We now attempt to quantify the adversary's advantage from executing the attack above.  The amount of gain enjoyed by the attacker depends on the other buyers' strategies.  We shall make a conservative assumption that only a single adversary uses the attack strategy and that the remaining buyers truthfully respond to market signals in the sense that each submits a bid if and only if she finds the current valuation lower than her ``true'' valuation at that given instant.  Let $a$ denote the adversary's bonus for early participation, and let $b$ be the bonus which other truthful buyers obtain from their purchases moments after the withdrawal lock.  In the example parameters given in Section~\ref{sec:bypass}, $a$ is at most $20\%$ and $b$ is $10\%$.  Let $x$ denote the adversary's initial capital contribution at bonus~$a$, and let $y$ denote the capital contributions of all other buyers at bonus~$b$.

If the adversary had not employed the \emph{blackout attack} above, then all buyers, including the adversary himself, might obtain bonus $a$, resulting in
\begin{equation} \label{eq:truthful}
\frac{(1+a) \cdot x}{(1+a) \cdot(x+y)}
\end{equation}
fraction of the tokens going to the ``adversary.''  If other buyers decided to buy in later than the first block of the crowdsale, then the attacker might receive even more.  By executing the blackout attack above, on the other hand, the adversary collects a larger fraction of the tokens, namely
\begin{equation} \label{eq:blackout}
\frac{(1+a) \cdot x}{(1+a) \cdot x + (1+b) \cdot y}.
\end{equation} \label{eq:difference}
For clarity, we simplify notation as follows.  Let $A = 1+a$ and $B = 1+b$.  Now the adversary's advantage is the difference between the percentages \eqref{eq:blackout} and \eqref{eq:truthful}, namely
\begin{equation} \label{eq:difference}
\frac{Ax}{Ax+By} - \frac{Ax}{Ax+Ay} = Axy \cdot \frac{A-B}{A^2x^2 + A^2xy + ABxy + ABy^2}.
\end{equation}
There are two cases to consider.  If $x \leq y$, then we can use the fact that $ABxy \leq ABy^2$ and $0 \leq A^2x^2$ to obtain the following upper bound for the right-hand side (RHS) of \eqref{eq:difference}:
\[
\frac{A - B}{A + 2B}.
\]
If $x \geq y$, on the other hand, then, by a similar method, then RHS is bounded above by
\[
\frac{A - B}{2A + B}.
\]
In either case, the adversary's gain is at most $(a-b)/3$ fraction of tokens, hence a withdrawal penalty of this amount suffices to deter him if he is rational.  We remark that if the adversary were to commence his attack later during the voluntary withdrawal period, his advantage would be even less.

Finally, we analyze the penalty in the protocol which discourages this attack.  According to Step~2 of the crowdsale protocol (see Section~\ref{sec:icoprotocol}) a buyer who executes a voluntary withdrawal must also permanently commit a fraction $p$ of his capital contribution to the crowdsale, where $p$ equals the fraction of the voluntary withdrawal epoch which has elapsed at the moment of withdrawal.  Assume, for the moment, that the adversary initiates his blackout attack via a purchase in the first block of the crowdsale, and let~$a$ be the bonus amount at this time (e.g.\ 20\%).  Let $b$ be the bonus available at the instant when truthful buyers would theoretically observe and react to the adversary's withdrawal, (e.g.\ 10\% at the instant of the withdrawal lock).  The adversary's withdrawal penalty is then at least $ap/3$, since he forfeits $p/3$ fraction of his bonus in accordance with Step~2.2 of the protocol (Section~\ref{sec:icoprotocol}).   Step~2 actually has a considerably stronger penalty effect since the adversary can no longer recover his initial bonus.  As argued in the previous paragraph, the adversary loses money whenever this penalty exceeds the threshold $(a - b)/3$.  Combining this fact with our observation about $ap/3$, we see this loss occurs whenever~$p$ satisfies
\begin{equation} \label{eq:ab3a}
p > \frac{a-b}{a}.
\end{equation}
Let $p_1$ denote the fraction on the right-hand of \eqref{eq:ab3a}.

We analyze the time bound further.  Since other truthful buyers know that a rational adversary would have withdrawn by time~$p_1$, they can reliably bid after this point based on current market information.  It follows that a rational, truthful buyer would bid at time~$p_1$ rather than waiting until the bonus drops all the way to $b$, and so it makes sense to recursively reapply the argument using the bonus for truthful victims at time $p_1$.  Assuming a linearly decreasing pricing scheme, the bonus at time $p_1$ is $b_1 = a - p_1(a-b)$ by Step~1.2 of the protocol.  Therefore the adversary loses money whenever his withdrawal penalty is greater than $(a-b_1)/3$, whence, by the same argument used to derive \eqref{eq:ab3a}, the rational adversary surely would have withdrawn before time
\[
p_2 = \frac{a-b_1}{a} = \frac{p_1(a-b)}{a}.
\]
Iterating, we obtain
\[
p_n = \left(\frac{a-b}{a}\right)^n.
\]
This quantity vanishes as $n$ tends to infinity because the geometric term is strictly bounded above by~1.  Therefore, assuming a recursive application of rationality as illustrated above, the blackout attack will never succeed.  

Without loss of generality, we may assume that the adversary broadcasts his bid after the first block of the crowdsale.  The argument proceeds as before, except we measure time relative to that later block instead.  This assumption does no harm because the ``true'' withdrawal penalty, or more precisely capital commitment, dominates the simulated penalty one would obtain from measuring time relative to the moment of actual withdrawal.

We conclude by observing that the instantaneous market information available prior to the withdrawal lock does not consist of a unique valuation but rather a list of active bids together with some quantity of permanently committed capital.  The form of this information seems roughly consistent with human psychology in the sense that buyers may not, at first, have a specific valuation in mind.  The protocol offers each buyer the opportunity to reach a personal interpretation of available market data through complementary, qualitative channels.

\subsection{Fairness}

The protocol treats large purchases and small purchases uniformly.  Whales with low personal caps get pushed out of the crowdsale in just the same way as buyers who purchase a fraction of a token.  Furthermore, any buyer can freely specify any non-trivial personal cap that they please, and the fees for submitting transactions to the crowdsale smart contract are flat fees based on Ethereum gas prices (See Section~\ref{sec:implementation}).  Finally, the smart contract handles all purchases publicly, which means that all prospective buyers have perfect information about all other bids.

\subsection{Censorship}

In the past, whales have benefited from network congestion during capped crowdsales.  One BAT crowdsale buyer, for example, paid \$6660 towards a single transaction fee to ensure that his transaction entered the current block, effectively preventing others buyers from participating \cite{buterinsdilemma}.  In theory miners can potentially mimic or amplify this bias by censoring transactions during the crowdsale.  While these types of denial-of-service attacks may succeed in quick crowdsales, they become impractically costly over extended crowdsales, such as the one presented here.

\subsection{Last-minute withdrawals}

A whale who bid a huge number of tokens but then withdrew his purchase in the last block of the crowdsale could deter other buyers from participating and thereby obtain an artificially low valuation.  For this reason, the protocol forbids voluntary withdrawals in the latter phase of the crowdsale.  During the first phase of the crowdsale, the fraction of each bid available for voluntary withdrawal decreases smoothly over time, thereby mitigating against the edge case discussed in Section~\ref{sec:blackout} in which one buyer makes a last-minute withdrawal and the remaining buyers do not have an opportunity to react to it.  By incentivizing against last-minute actions, the protocol increases the chance of converging to stable bids and valuation prior to the end of the crowdsale.

\subsection{Overcoming inertia}

The protocol design encourages early participation and maintains a low barrier to crowdsale entry.  An inflation ramp (Section~\ref{sec:bypass}) incentivizes buyers to enter the crowdsale early and form a liquid market.  Moreover, the beginning of the crowdsale offers a low-risk trial period in which buyers can voluntary withdraw their bids with little penalty.  Finally, the crowdsale has a relatively simple user interface.  Anyone who wishes to purchase tokens without risking automatic withdrawals can simply submit a transaction to the crowdsale's smart contract with a personal cap exceeding the total number of native tokens in circulation (a predictable value in Ethereum).

\section{Lightweight implementation} \label{sec:implementation}

In Section~\ref{sec:adversary}, we made the simplifying assumption that ``crowdsale smart contracts always process bids correctly.''  We now refine this simplistic point-of-view.  Computational tasks of minimal complexity, that is, those quantitatively resource bounded by per block \emph{gas limit}\footnote{\url{https://ethstats.net/}}, execute correctly so long as enough \emph{gas}, or block founder payment, accompanies the transaction which initiates the task.  Smart contracts themselves may call tasks so long as: 
\begin{enumerate}
\item the task itself runs within the per block gas limit (and available network bandwidth),

\item the smart contract has sufficient \emph{ether}, or native currency, to pay for the task execution, and

\item the smart contract remains dormant between the blocks in which users interact with it.
\end{enumerate}

The ICO protocol's main loop (Section~\ref{sec:icoprotocol}) requires maintenance of a list of addresses with various personal caps, a way of finding the set of addresses with minimal personal caps, and a mechanism for deleting this set from the list.  Traditional heap data structures require $O(\log n)$ time to execute one or more of these operations on a list of size~$n$.  Hence if the address list grows sufficiently large, then the smart contract cannot maintain the heap without violating item~1 above.  Moreover, per item~2, who pays for each of these operations and when?  An autonomous crowdsale smart contract could become increasingly expensive over time as it acquires additional bids.  It will be clear from inspection that our construction below satisfies item~3.

Given the infeasibility of maintaining a heap data structure within the crowdsale smart contract, one might wonder whether the smart contract could recruit outside parties to perform the necessary heap maintenance through incentives.  Outside parties, however, might supply incorrect data.  How would the smart contract know whether the supplied address's personal cap is minimal?  The smart contract's inability to verify and delete the actual minimum opens a potential attack vector.  Suppose there exists a bid of 50 with personal cap 60, a bid of 35 with personal cap 80, and that the current valuation is 115.  If an attacker reports that 80 is the minimum, active personal cap, then the $(35, 80)$ bid would come out entirely while part of the $(50,60)$ bid remains active.  According to the protocol specification, however, the entire $(50,60)$ should come out while instead the $(35,80)$ bid remains completely active.

Instead of a heap, the crowdsale contract sorts bids via a \emph{linked list}, a data structure wherein each element in the list contains a ``link'' pointing to its immediate successor.  While insertion into a sorted linked list structure with~$n$ elements requires an expensive~$O(n)$ operation, checking an insertion, by inspecting backward and forward neighbors, requires negligible, constant time (in accordance with item~1 above).  Hence the crowdsale smart contract can add a new bid into the linked list using only a small amount of gas plus some verifiably correct advice supplied by an incentivized third-party.  We shall assume that the third-party's cost to execute the advice operation is negligible.  Moreover, the smart contract can itself identify and/or delete the minimal element in an ascending linked list in negligible constant time.

The crowdsale smart contract efficiently maintains a \emph{valuation pointer} which points to the current valuation and which, by the monotone valuation invariant  (Section~\ref{sec:mvi}), moves monotonically forward through the linked list over time.  A bid is active if and only if sits at or ahead of the valuation pointer.  While buyers may submit bids to crowdsale smart contract prior to the withdrawal lock, the protocol cannot meaningfully point to a valuation figure until after the voluntary withdrawal period.   Therefore the crowdsale valuation pointer comes into existence only after the withdrawal lock.

The crowdsale smart contract must accurately update the valuation pointer in each block in order to enforce the inequality $c(A) > V$ from Step~1.1 of the protocol (Section~\ref{sec:icoprotocol}).  Without this property, the crowdsale could end up in the awkward situation where a fresh bid enters with personal cap~$V$ and the crowdsale (should have) already kicked back some fraction of the capital from other bids with personal cap~$V$.

For purposes of gas efficiency, we attempt to minimize the number of moves that the valuation pointer must make in each block.  The astute reader may note that the sums of parameter~$k$ over each iteration in Step~3 of the protocol (Section~\ref{sec:icoprotocol}), i.e.\ the number of bids with minimal personal caps, could be quite large, particularly in blocks where a large amount of capital enters the crowdsale.  Since the gas limit restricts the number of such bids that the protocol can process in a single block (item~1), insufficient gas could result in an incorrect valuation pointer at the next time step.  Indeed, the valuation pointer always moves forward to the bucket containing the current valuation.  In order to avoid lags in valuation pointer updates, we require the bids to sit in evenly-spaced, discrete \emph{buckets}, each containing bids with identical personal caps.  These buckets form the elements of the linked list, and each bucket points to the bucket representing the next largest personal cap.   The protocol saves gas by monitoring the sum of all capital and remaining active fraction in each bucket rather than keeping track of individual, member bids.  These quantities suffice to perform the calculations in Step~3.2 and Step~3.3 of the protocol, and indeed, all bids in a given interval are subject to the same automatic withdrawals.  In summary, when a buyer submits a new bid, she includes sufficient incentive for a third-party to both supply the memory location of the appropriate bucket for her bid and generate a new bucket for it on the fly if needed.

We remark that the bucket approach avoids the contentious situation where two bids are inserted into the same spot in the linked list at the same time, and only the first one makes it into the crowdsale.  We also observe that the granularity of each interval should be greater than the maximum cumulative capital contribution per block (e.g.\ the total native tokens in circulation) divided by the total number of pointer moves possible per block.  The September~2017 gas limit of 6.7~million might support up to 300,000 pointer moves per block.  We estimate this figure based on a fixed cost of 40,000 gas to initiate the pointer moving loop, plus 19 gas per iteration of the loop, plus some gas for submitting the bids.

At the end of the sale, each buyer manually pulls either native tokens or newly generated tokens out of the crowdsale smart contract.  The pulling operation provides greater security over automated pushing since careless coding of the latter form could result in an unintended, incentivized, reentrancy exploit.

\section{Personal minimums} \label{sec:personalmin}

Funding inherently makes a project more valuable.  Starting costs vary but may include personnel, legal fees, marketing, office space, licenses, travel, equipment, or administration.  Buyers may perceive substantially higher risk with tokens associated with an underfunded project.  Furthermore, they may not agree on the fixed cost required to get a venture off the ground.  A crowdsale which permits each buyer to specify a \emph{personal minimum} above which she would wish to participate therefore reduces this entrance risk.  Below we describe a practical mechanism for realizing personal minimums.

Suppose that a crowdsale smart contract receives 3 bids, each with capital contribution 10 and personal minimum 30.  While the crowdsale valuation might never reach the activation threshold of these three bids, each bid would gladly activate were the other two bids to activate first.  How could a crowdsale smart contract recognize this relationship amongst a deluge of other submitted bids?

We describe an incentivized, method for monitoring personal minimums which functions with limited gas resources.  In short, the crowdsale smart contract rewards users who submits a \emph{target valuation}~$x$ and a \emph{target set} of bids such that:
\begin{enumerate}
\item $x$ exceeds all personal minimums of bids in the target set, and
\item the sum of capital contributions from bids in the target set exceeds~$x$.
\end{enumerate}
While the target set here may include all active bids, a nontrivial target set must include inactive bids as well.  The two properties above suffice to justify the activation of all (inactive) bids in the target set, and the smart contract can easily verify these conditions.  We append this operation to Step~1 in Section~\ref{sec:icoprotocol}.

Each bid with a personal minimum includes a flat fee to pay for outside users to \emph{poke} it in via the operation above.  A simultaneous poke of five bids, then, would collect the sum of fees from each of these five bids.  The rewards should suffice to compensate for the fact that identical pokes specifying the same pair $(\text{target valuation}, \text{target set})$ may occur and that only the first receives the reward.  Note that the computational complexity of identifying a target valuation and corresponding set grows as the set of bids grows, so the poking reward should scale over time in a crowdsale with a truly enormous number of bids.  Finally, we remark that an SSTORE call in Ethereum costs 5000~gas, and therefore, under the gas limit of 6.7~million as of September 2017, one could poke in as many as 1300~bids with a single transaction.

We now describe an implementation structure which increases gas efficiency, maintains the protocol's indifference to the order of bid submissions, and allows the protocol to remain resistant against attacks from whales (Section~\ref{sec:mvi}).  The crowdsale protocol places bids into discrete, evenly-spaced buckets based on personal minimums, just as it does for personal caps.  As before, the smart contract offers incentives for advice on where to insert new bids and buckets into the linked list.  Bids transition to \emph{fully active} status when successfully poked, and only fully active bids count towards the crowdsale valuation.  For gas efficiency, the procotol pokes bids by the bucket.  If a bid with a given personal minimum can become fully active, then so can all other bids sharing the same personal minimum.

Finally, as discussed in Section~\ref{sec:implementation}, the crowdsale must enforce the condition $c(A) > V$  on all newly active bids, including those which enter from poking.  We assume that the poking incentive guarantees that all bids become fully active as soon as they become eligible for such status, and therefore that the order of pokes does not materially affect the final set of fully active bids in the crowdsale.  Each bid becomes fully active before the valuation pointer reaches its personal cap bucket and maintains this status until, if ever, the crowdsale valuation reaches the bid's personal cap.

For full implementation details, please see our codebase: 
\begin{quote}
\hspace{-4.5ex}
\small
\url{https://github.com/TrueBitFoundation/interactive-coin-offerings/}.
\end{quote}

\section{Conclusion}

Crowdsales pose critical, timely, and challenging game-theoretic questions.  While reasonable assumptions and deductive reasoning can guide our intuition, ultimately we must also rely on empirical evidence to arrive at definitive cryptoeconomic conclusions and ideal parameter selections.  The protocol discussed herein offers a means for achieving fair valuation equilibrium.

\appendix
\section{Epilogue} \label{sec:epilogue}

Since the initial publication of this work, Truebit developed, together with BKDF, a graphical interface for the interactive coin offering.  The following link provides screenshots and a demo.
\begin{center}
\small
\url{https://medium.com/truebit/exploring-the-iico-interactive-dapp-337e1d09fffe}
\end{center}
This version improves gas efficiency via a ``poke-out'' mechanism and provides the ``poke-in'' functionality outlined in Section~\ref{sec:personalmin}.  The article below details these key changes and contrasts this third-generation system with previous implementations by Modular and Kleros.
\begin{center}
\small
\url{https://medium.com/truebit/an-intro-to-truebits-interactive-coin-offering-e6d1dae36090}
\end{center}

\paragraph{Acknowledgments.} The authors are grateful to Christian Reitwie{\ss}ner for help with implementation details.  We also thank Ryan Zurrer for useful comments.

\bibliographystyle{plain}
\bibliography{ico}

\end{document}